\title{Vector Space Semantics for Lambek Calculus with Soft Subexponentials}
\titlerunning{Vector Space Semantics for $\SLLM$}
\author{Lachlan McPheat \and
	Hadi Wazni \and
	Mehrnoosh Sadrzadeh}
\authorrunning{McPheat et al.}
\institute{University College London, UK \\
\email{\{l.mcpheat, hadi.wazni.20, m.sadrzadeh\}@ucl.ac.uk}}
\newcommand{\SLLM}{\mathbf{SLLM}}
\newcommand{\blstar}{\mathbf{!L^*}}
\newcommand{\semantics}[1]{[\![#1]\!]}
\newcommand{\csemantics}[1]{(\!|#1|\!)}
\newcommand{\ov}[1]{\overrightarrow{#1}}
\newcommand{\ol}[1]{\overline{#1}}
\newcommand{\id}{\mathrm{id}}
\newcommand{\ev}{\mathrm{ev}}
\newcommand{\C}{\mathcal{C}}
\newcommand{\fdVect}{\mathbf{fdVect}}
\newcommand{\bs}{\backslash}
\renewcommand{\epsilon}{\varepsilon}
\renewcommand{\phi}{\varphi}
\newcommand{\R}{\mathbb{R}}
\newcommand{\N}{\mathbb{N}}
\begin{document}

\maketitle
\begin{abstract}
We develop a vector space semantics for Lambek Calculus with Soft Subexponentials, apply the calculus to construct compositional vector interpretations for parasitic gap noun phrases and discourse units with anaphora and ellipsis, and experiment with the constructions in a distributional sentence similarity task. As opposed to previous work, which used Lambek Calculus with a Relevant Modality the calculus used in this paper uses a bounded version of the modality and is decidable.  The vector space semantics of this new modality allows us to meaningfully define contraction as projection and provide a linear theory behind what we could previously only  achieve via nonlinear maps.
\end{abstract}

\section{Introduction}

The origin of  logics that add copying and movement modalities to the Lambek calculus can be traced back to the work of Morrill et al \cite{barry1991}, who used restrictions on  linear logic's permutation and contraction rules in order to reason about larger fragments of natural language. Modalities were also added to Lambek calculus by Moortgaat \cite{moortgat1996multimodal} by means of indexed families of connectives that  would primarily control different forms of associativity. Later Moot explored the use of indexed modalities in linear logic and developed proof nets for them \cite{Moot2002}, a modality for controlled permutation was amongst this set of indexed modalities.  In  \cite{jager1998multi}, J\"{a}ger presented a multimodal Lambek calculus that modelled anaphora and ellipsis. Later in his book \cite{jager2006anaphora}, he put forward an alternative unimodal edition for  the same phenomena. The preferred semantics for these logics has always been a traditional Montague-style semantics, until a vector space interpretation was provided in  \cite{WijnSadr2019}. Here,  the authors develop a vector space interpretation for a  Lambek calculus with  restricted permutation and contraction rules and test the efficacy of a J\"{a}ger style treatment of verb-phrase ellipsis  on a disambiguation task of  distributional semantics \cite{Firth1957}. This  work led to a unified type-logical  distributional framework for co-reference resolution in natural language \cite{wijnholds-sadrzadeh-2019-evaluating}, but its underlying logic was undecidable and also could not distinguish between the strict and sloppy readings of  anaphora with ellipsis examples.

 Lambek calculus with a Relevant Modality  $\blstar$ \cite{Kanovich2016} is a recent  logic which follows suit from the work of  J\"{a}ger and Morrill.  
 In previous work, we developed sound  categorical and vector space semantics for this logic \cite{mcpheat2020categorical}. We demonstrated how it can model anaphora and ellipsis and can also distinguish between the sloppy vs strict readings \cite{mcpheat2021}. Our categorical semantics $\C(\blstar)$ is a monoidal biclosed category $(\C,\otimes, I, \Leftarrow,\Rightarrow)$, whose objects are the formulas of $\blstar$ and whose morphisms are its derivable sequents. We  require this category to be equipped with a coalgebra modality \cite{St2006} i.e. a  monoidal comonad $(!,\delta,\epsilon)$ on $\C$ and with diagonal natural transformation $\Delta :\ ! \to \ ! \ \otimes \ !$. This is the modality responsible for interpreting contraction. We further require that the coalgebra modality makes the objects of the category permute, i.e. we require a natural isomorphism $1_\C \otimes \ ! \ \cong \ ! \ \otimes 1_\C$. 
	The vector space semantics of $\blstar$, again from \cite{mcpheat2020categorical}, is a monoidal biclosed functor $\semantics{ \ }$ from $\C(\blstar)$ to the category of finite dimensional real vector spaces, $\fdVect$, maping formulas $A$ of $\blstar$ (objects of $\C(\blstar)$) to finite dimensional real vector spaces $\semantics{A}$, and derivable sequents $\Gamma \longrightarrow A$ of $\blstar$ (morphisms of $\C(\blstar)$) to linear maps $\semantics{\Gamma} \to \semantics{A}$. We instantiated the categorical model in $\fdVect$ by interpreting $!$ in several different ways, the most classical being a Fermionic Fock Space functor, originally used  in \cite{Blute1994} to interpret the $!$ modality of linear logic. 
	
	The calculus we depended on for this work, $\blstar$, however, was proven undecidable in the same paper it was introduced which has posed a challenge for all the work we have done on top of it. Thankfully, the creators of $\blstar$  have since defined a new logic with a similar expressive power but with a decidable fragment, referred to by $\SLLM$ \cite{Kanovich2020}. In $\SLLM$ permutation and contraction (now called `multiplexing') are separated into two different modalities, whose logical rules are inspired by Light and Soft linear logic \cite{Girard1998,Lafont2004}. It is this logic we soundly interpret in $\fdVect$ in this paper in the style of \cite{mcpheat2020categorical}.
	In doing so, we also managed to produce an interpretation for copying vectors, e.g. a vector $\ov{v}$ can now essentially be fully copied into two vectors $\ov{v} \otimes \ov{v}$ using a simple projection that was made possible due to the presence of accessible bounds. The map $\ov{v} \mapsto \ov{v} \otimes \ov{v}$ is non linear, and in previously we had to be content with linear approximations of it, e.g. maps that would only copy the bases.  Achieving full copying has been a surprising bonus when working in $\SLLM$.
	As with our previous work, we demonstrate the applicability of the logic to natural language by modelling anaphora and ellipsis and deriving  distinct interpretations of strict and sloppy examples. 
	
	No vector space semantics is complete without being accompanied by  large scale experimentation. So as in previous work, we also implement our model using neural word embeddings Word2vec, FastText, and BERT.  In previous work \cite{mcpheat2021} we experimented with a distributional disambiguation task and here we experiment with a distributional similarity task. We build type-driven vector representations for sentences with  verb phrase elliptical phrases of the similarity dataset of \cite{wijnholds-sadrzadeh-2019-evaluating} and compare the results of  the  linear copying map suggested in this paper with the different non-linear copying maps of previous work, with the non compositional verb-only, compositional grammar unaware  additive,  and BERT baselines. 
	
	
\begin{table}[t!]
		\[
		A ::= A\in At \mid A \cdot A \mid A/A \mid A\bs A \mid \ !A \mid \nabla A
		\]
		\[\begin{array}{ll}
		\,\\\\
		\infer[I]
			{A\longrightarrow A}
			{}
		\\\\
		\infer[\bs_L]
			{\Sigma_1,\Gamma, A\bs B, \Sigma_2 \longrightarrow C}
			{\Gamma \longrightarrow A & \Sigma_1, B, \Sigma_2 \longrightarrow C}
		& \qquad
		\infer[\bs_R]
			{\Gamma \longrightarrow A\bs B}
			{A,\Gamma \longrightarrow B}
		\\\\
		\infer[/_L]
			{\Sigma_1, B/ A,\Gamma, \Sigma_2 \longrightarrow C}
			{\Gamma \longrightarrow A & \Sigma_1, B, \Sigma_2 \longrightarrow C}
		& \qquad
		\infer[/_R]
			{\Gamma \longrightarrow B/A}
			{\Gamma,A \longrightarrow B}
		\\\\
		\infer[\cdot_L]
			{\Gamma_1, A\cdot B, \Gamma_2 \longrightarrow C}
			{\Gamma_1, A, B, \Gamma_2 \longrightarrow C}
		& \qquad
		\infer[\cdot_R]
			{\Gamma_1,\Gamma_2 \longrightarrow A\cdot B}
			{\Gamma_1 \longrightarrow A 
			& 
			\Gamma_2 \longrightarrow B}
		\\\\
		\infer[!_L]
			{\Gamma_1,!A, \Gamma_2 \longrightarrow B}
			{\Gamma_1,\overbrace{A,A,\ldots, A}^{n\text{ times}}, \Gamma_2 \longrightarrow B} 
		& \qquad
		\infer[!_R]
			{!A \longrightarrow !B}
			{A \longrightarrow B}
		\\\\
		\infer[\nabla_L]
			{\Gamma_1,\nabla A, \Gamma_2 \longrightarrow B}
			{\Gamma_1,A, \Gamma_2 \longrightarrow B}
		& \qquad
		\infer[\nabla_R]
			{\nabla A\longrightarrow \nabla B}
			{A\longrightarrow B}
		\\\\
		\infer[perm]
			{\Gamma_1,\nabla A,\Gamma_2, \Gamma_3 \longrightarrow B}
			{\Gamma_1,\Gamma_2,\nabla A, \Gamma_3 \longrightarrow B}
		& \qquad
		\infer[perm']
			{\Gamma_1,\Gamma_2,\nabla A, \Gamma_3 \longrightarrow B}
			{\Gamma_1,\nabla A,\Gamma_2, \Gamma_3 \longrightarrow B}
		\end{array}\]
		\caption{Formulas and rules of $\SLLM$. Where $1\leq n\leq k_0$.}\label{tab:SLLMrules}
	\end{table}

\section{$\SLLM$, Lambek Calculus with Soft Subexponentials}\label{subsec:SLLM}
		 $\SLLM$ is a cut-free logic with a decidable fragment  and a directed proof system. We recall its  definition from \cite{Kanovich2020} in   table \ref{tab:SLLMrules}. The decidable fragment is found when one chooses a global bound ($k_0$) on the number of formulas you may contract using the multiplexing rule (denoted by $!_L$). 
		Note that there are $k_0$ different instances of the multiplexing rule for a given bottom sequent $\Gamma_1,!A,\Gamma_2 \longrightarrow B$; one for each $1\leq n \leq k_0$. This determines the number of instances ($n$) of the formula $A$ that is being contracted (or `multiplexed') into $!A$. 
			Also notice the separation of multiplexing and permutation connectives and rules. A priori there is no reason to assume that the two properties of contractibility and permutation should coincide as they do in \cite{Kanovich2016}.

\section{Vector Space Semantics of $\SLLM$}\label{sec:VSS}
	In the following subsections we first inductively define a vector space semantics for $\SLLM$, prove its soundness, and then in detail explain how to construct vectors which effectively let us copy using projection maps. The interpretation of the subexponential $!$ is an adaptation of the tensor algebras of  \cite{mcpheat2020categorical}, which in turn originated from \cite{Blute1994}. The global multiplexing bound of  $\SLLM$ allows us to use a truncated form of the tensor algebras, we exploit this in the subsection following the proof of proposition \ref{prop:soundness}, and show how it helps us achieve  `full copying' as a linear map.
	
	A great deal of the definition of our semantics uses tensor products and dual spaces, which we recall in the following definitions.
	
	\begin{definition}\label{def:directSum}
		The \textbf{direct sum} (Cartesian Product, direct product) of two vector spaces $V,W$ is denoted $V\oplus W$ and contains vectors of the form $(v,w)$ where $v\in V$ and $w\in W$. It is a quick exercise to check that the interchange map $V\oplus W \to W \oplus V :: (v,w)\mapsto (w,v)$ is a linear isomorphism\footnote{That is, a linear map which is bijective.}, showing that $V\oplus W \cong W \oplus V$.
	\end{definition}
	
	\begin{definition}\label{def:tensorProduct}
		A \textbf{tensor product} of two vector spaces $V,W$ is a vector space $V \otimes W$ such that for any \textit{bilinear} map $h : V \oplus W \to U$ for some vector space $U$, there is a unique \textit{linear} map $\tilde{h}: V\otimes W \to U$ such that the original map $h$ is equal to the composite of $\tilde{h}$ and the canonical map $V\oplus W \to V\otimes W :: (v,w) \mapsto v\otimes w$. 
	\end{definition}

	Note also that since we have $V\oplus W \cong W \oplus V$, one can derive that $V\otimes W \cong W \otimes V$ from definition \ref{def:tensorProduct}.
	
	\begin{definition}\label{def:dualSpace}
		Given a vector space $V$, its \textbf{dual vector space} is the set of linear functionals on $V$, that is the set $\{f : V \to \mathbb{R} \mid f \ \text{linear}\}$, which we denote by $V^*$. We leave it to the reader to verify that $V^*$ is indeed a vector space.
	\end{definition}
	
	The last thing we recall before defining the vector space semantics is that the set of linear maps from a space $V$ to a space $W$ is itself a vector space, and is isomorphic to the space $V^* \otimes W$. With definitions \ref{def:tensorProduct} and \ref{def:dualSpace} and the note about spaces of maps under our belt, we may proceed to define vector space models of $\SLLM$.
	

	\begin{definition}\label{def:VSSDef}
	Vector space models of $\SLLM$ consist of a pair of maps, one mapping formulas of $\SLLM$ to finite dimensional real vector spaces, the other mapping proofs of $\SLLM$ to linear maps.
	We define a vector space model, denoted $ \semantics{\ } : \SLLM \to \fdVect$, and define it inductively on formulas and derivable sequents below.
	\[\begin{array}{c}
		\semantics{A} := V_A \in \fdVect_\R, \text{ for atomic formulas } A, \\ \\
		\semantics{A, B} = \semantics{A \cdot B} := \semantics{A} \otimes
	\semantics{B} \qquad \semantics{!A} := T_{k_0}\semantics{A} \qquad \semantics{\nabla A} :=\semantics{A}
	\\ \\
		\semantics{A \bs B} := \semantics{A}^* \otimes \semantics{B} \qquad \qquad 
		\semantics{B / A} := \semantics{A}^* \otimes \semantics{B} 
	\end{array}\]
	where $k_0 \in \N$ is the multiplexing bound and   $T_{k_0}\semantics{A}$ is the $k_0$-th truncated Fock space of $\semantics{A}$, defined on a vector space $V$ as:
	\[
	T_{k_0}V := \bigoplus_{i=0}^{k_0}V^{\otimes i} = \R \oplus V \oplus (V \otimes V) \oplus \cdots \oplus V^{\otimes k_0}.
	\]
	Proofs of sequents $\Gamma \longrightarrow A $ in $\SLLM$ are interpreted as linear maps \hbox{$\semantics{\Gamma} \longrightarrow \semantics{A}$}. 
	\end{definition}
\begin{definition}\label{def:truth}
A rule of  \ $\SLLM$ is sound in  a vector space model iff whenever the interpretation of the top part is a linear map, so is the interpretation of the bottom part. $\SLLM$ is sound in a vector space model iff all its rules are.
\end{definition}

		\begin{proposition} \label{prop:soundness} $\SLLM$ is sound in $\semantics{\ }$ of definition \ref{def:VSSDef}.
		\end{proposition}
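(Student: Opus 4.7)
Given the rule-by-rule formulation in Definition \ref{def:truth}, I would check each rule of $\SLLM$ listed in Table \ref{tab:SLLMrules} separately, exhibiting a linear interpretation of the conclusion given linear interpretations of the premises. The axiom $I$ is interpreted as $\id_{\semantics{A}}$. Throughout, I rely only on the standard structure of $\fdVect$: tensoring morphisms with identities, the evaluation pairing $\ev : \semantics{A}^* \otimes \semantics{A} \to \R$, transposition via the universal property of the tensor product (Definition \ref{def:tensorProduct}), the symmetry $V \otimes W \cong W \otimes V$ noted after that definition, and the hom-tensor identification recalled just before Definition \ref{def:VSSDef}.

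For the Lambek rules, $\bs_L$ takes $f : \semantics{\Gamma} \to \semantics{A}$ and $g : \semantics{\Sigma_1} \otimes \semantics{B} \otimes \semantics{\Sigma_2} \to \semantics{C}$, tensors $f$ with identities on the remaining factors, contracts the newly-exposed $\semantics{A} \otimes \semantics{A}^*$ through $\ev$, and postcomposes with $g$; $/_L$ is mirror-symmetric. The $\bs_R$ and $/_R$ rules are precisely the transposes of their premises under the hom-tensor identification. The $\cdot_L, \cdot_R$ cases are immediate because $\semantics{A, B}$ and $\semantics{A \cdot B}$ are the same space by definition. The $\nabla_L, \nabla_R$ cases are immediate because $\semantics{\nabla A} = \semantics{A}$, and consequently the two permutation rules reduce to nothing more than applying the symmetry isomorphism of $\otimes$ in the appropriate position of the context.

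The substantive cases are the two soft-modality rules. For $!_L$ with parameter $n$, the premise yields $h : \semantics{\Gamma_1} \otimes \semantics{A}^{\otimes n} \otimes \semantics{\Gamma_2} \to \semantics{B}$, and I would precompose with the canonical projection $\pi_n : T_{k_0}\semantics{A} \to \semantics{A}^{\otimes n}$ onto the $n$-th summand of the direct sum, tensored with identities on $\semantics{\Gamma_1}$ and $\semantics{\Gamma_2}$. For $!_R$, given $f : \semantics{A} \to \semantics{B}$, the required map $T_{k_0}\semantics{A} \to T_{k_0}\semantics{B}$ is the direct sum $\bigoplus_{i=0}^{k_0} f^{\otimes i}$ of the tensor powers of $f$, which is linear summand-wise and hence linear overall.

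I do not expect any genuine obstacle. The only technical point worth flagging is that the multiplexing bound $1 \leq n \leq k_0$ is exactly what makes $\semantics{A}^{\otimes n}$ a direct summand of $T_{k_0}\semantics{A}$, so that the projection $\pi_n$ used for $!_L$ is well defined. This is the semantic shadow of the decidability of $\SLLM$, and it is precisely the reason the truncated Fock space, rather than the full one used in \cite{Blute1994,mcpheat2020categorical}, is the natural interpretation of $!$ here. The remaining verifications are routine diagrammatic reassembly of the kind already carried out for $\blstar$ in \cite{mcpheat2020categorical}.
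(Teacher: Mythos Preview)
Your proposal is correct and follows essentially the same rule-by-rule case analysis as the paper: identity for the axiom, evaluation-plus-tensoring for $\bs_L,/_L$, currying for $\bs_R,/_R$, the identification $\semantics{A,B}=\semantics{A\cdot B}$ for the product rules, the projection $\pi_n$ for $!_L$, the summand-wise map $\bigoplus_i f^{\otimes i}$ for $!_R$, the triviality of $\semantics{\nabla A}=\semantics{A}$ for the $\nabla$-rules, and the symmetry of $\otimes$ for $perm,perm'$. Your remark that the bound $1\leq n\leq k_0$ is exactly what guarantees $\semantics{A}^{\otimes n}$ is a summand of $T_{k_0}\semantics{A}$ is also the point the paper singles out immediately after the proof.
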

		
		\begin{proof} We proceed by a case analysis on the soundness of the rules. \\
		\\
		\textbf{{Axiom}:} \
			The axiom of $\SLLM$ is interpreted as the existence of identity maps and is immediately sound.
		\\
		\\
		$\mathbf{/_L, \bs_L} $ are interpreted as:
		\[
		\infer[]
		{g^f:\semantics{\Delta_1}\otimes \semantics{B}\otimes \semantics{A}^* \otimes \semantics{\Gamma} \otimes \semantics{\Delta_2} \longrightarrow \semantics{C}}
		{f: \semantics{\Gamma}\longrightarrow \semantics{A} & 
			g: \semantics{\Delta_1}\otimes \semantics{B} \otimes \semantics{\Delta_2} \longrightarrow \semantics{C}}
		\] 
		and 
		\[
		\infer[]
		{^fg:\semantics{\Delta_1}\otimes \semantics{\Gamma} \otimes \semantics{A}^* \otimes \semantics{B}\otimes  \semantics{\Delta_2} \longrightarrow \semantics{C}}
		{f: \semantics{\Gamma}\longrightarrow \semantics{A} & 
			g: \semantics{\Delta_1}\otimes \semantics{B} \otimes \semantics{\Delta_2} \longrightarrow \semantics{C}}
		,\]
		where $g^f$ and $^fg$ are the defined below. All of the maps in the definition are linear, thus making $g^f$ and $^fg$ linear too, as required.\footnote{The map $\ev$ refers to the evaluation map. For example, $\ev_{V \Rightarrow W}(f \otimes w)$ is defined to be equal to $f(w)$ and $\ev_{W\Leftarrow V}(w \otimes g)$ is defined to be equal to $g(w)$ given $f\in V\otimes W^*, g \in W^*\otimes V, w\in W$.}
		\[\begin{array}{c}
		g^f := g \circ (\id_{\semantics{\Delta_1}} \otimes \ev_{\semantics{B}\Leftarrow\semantics{A}} \otimes \id_{\semantics{\Delta_2}}) \circ (\id_{\semantics{\Delta_1}} \otimes \id_{\semantics{B}} \otimes \id_{\semantics{A}^*} \otimes f \otimes \id_{\semantics{\Delta_2}})
		\\ \\
		^fg := g \circ (\id_{\semantics{\Delta_1}} \otimes \ev_{\semantics{B}\Rightarrow\semantics{A}} \otimes \id_{\semantics{\Delta_2}}) \circ (\id_{\semantics{\Delta_1}} \otimes f \otimes \id_{\semantics{A}^*} \otimes \id_{\semantics{B}} \otimes \id_{\semantics{\Delta_2}})
		\end{array}\]
		\\
	 $\mathbf{/_R}, \mathbf{\bs_R} $ are interpreted using the \emph{tensor-hom adjunction}, also referred to as the \emph{left and right currying}. The semantics of $/_R$ is as follows:
		\[\infer[]
		{\Lambda^r(f) : \semantics{\Gamma} \longrightarrow \semantics{B} \otimes \semantics{A}^*}
		{f: \semantics{\Gamma}\otimes\semantics{A} \longrightarrow \semantics{B}}
		\]
		where for vectors $\gamma \in \semantics{\Gamma}$, we have a linear map $\Lambda^r(f)(\gamma)  \in \semantics{B}\otimes\semantics{A}^*$ given by $\Lambda^r(f)(\gamma)(a) := f(\gamma \otimes a)$. 
		Similarly for the $(\bs R)$ rule we have:
		\[\infer[]
		{\Lambda^l(f): \semantics{\Gamma} \longrightarrow \semantics{A}^*\otimes \semantics{B}}
		{f: \semantics{A}\otimes\semantics{\Gamma} \longrightarrow \semantics{B}}
		\]
		where $\Lambda^l(f)$ is defined analogously.
		\\
		\\
		$\mathbf{\cdot_L}, \mathbf{\cdot_R}$  are immediately sound by the identification of the comma and the $\cdot$ in the semantics. Explicitly:
		\[
		\infer[]
		{f: \semantics{\Gamma_1}\otimes\semantics{A}\otimes\semantics{B}\otimes \semantics{\Gamma_2}\to \semantics{C}}
		{f: \semantics{\Gamma_1}\otimes\semantics{A}\otimes\semantics{B} \otimes \semantics{\Gamma_2}\to \semantics{C}}
		\qquad
			\infer[]
			{f\otimes g: \semantics{\Gamma_1}\otimes\semantics{\Gamma_2} \to \semantics{A}\otimes\semantics{B}}
			{f: \semantics{\Gamma_1} \to \semantics{A} & g: \semantics{\Gamma_2}\to \semantics{B}}.
		\]
		\\ 
		$\mathbf{!_L} $ is interpreted as:
			\[\infer[]
			{C_n(f): \semantics{\Delta_1}\otimes T_{k_0}\semantics{A} \otimes \semantics{\Delta_2} \longrightarrow \semantics{B}}
			{f: \semantics{\Delta_1}\otimes \overbrace{\semantics{A}\otimes \semantics{A} \cdots \otimes \semantics{A}}^{n-\text{times}} \otimes \semantics{\Delta_2} \longrightarrow \semantics{B}}
			\]
			where $C_n(f)$ is a linear map defined using the $n$-th projection $\pi_n: T_{k_0}(\semantics{A}) \to \semantics{A}^{\otimes n}$ and as $f \circ (\id_{\semantics{\Delta_1}} \otimes \pi_n \otimes \id_{\semantics{\Delta_2}}).$
		\\ 
		\\
	 $\mathbf{!_R}$ is interpreted as an application of $T_{k_0}$. That is, 
		 \[
		 	\infer[]
		 		{T_{k_0} f:  T_{k_0}\semantics{A} \to T_{k_0}\semantics{B}}
		 		{f : \semantics{A} \to \semantics{B}}
		 \]
		 where $T_{k_0}(f)$ is a linear map, defined as:
		 \[T_{k_0}(f)\left(\sum_{i = 0}^{k_0} \bigotimes_{j=1}^{n_i}v_j\right) := \sum_{i = 0}^{k_0} \bigotimes_{j=1}^{n_i}f(v_j).\]
		 For an easier example of how $T_{k_0}(f)$ is applied, consider a vector $u + v \otimes w \in T_{k_0}\semantics{A} $ which is mapped to $f(u) + f(v)\otimes f(w) \in T_{k_0}\semantics{B}$. 
		\\
		\\
	$\mathbf{\nabla_R}, \mathbf{\nabla_L}$ are interpreted trivially, since $\nabla$ is interpreted as identity on formulas. One sees this explicitly when writing out the interpretations of $\nabla_L,\nabla_R$ in the semantics. Consider $\nabla_L$, which  corresponds to 
		\[
			\infer[]
				{f': \semantics{\Gamma_1}\otimes \semantics{A}\otimes \semantics{\Gamma_2} \to \semantics{B}}
				{f: \semantics{\Gamma_1}\otimes \semantics{A}\otimes \semantics{\Gamma_2} \to \semantics{B}}.
		\]
		Since we interpret $\nabla A $ as $\semantics{\nabla A} = \semantics{A}$ there is no distinction remaining between the top and bottom rules, meaning that the a priori different interpretation $f'$ is in fact equal to $f$. The case for $\nabla_L$ is slightly simpler, and we encourage the reader to write it out for themselves.
		\\
		\\ 
		$\mathbf{perm}, \mathbf{perm'}$ are interpreted using the symmetry of $\otimes$ in $\fdVect$.
		\qed
	\end{proof}
	
	\paragraph{\bf Interpreting Projection as Copying.}
	Syntactically speaking,  the multiplexing rule does the job of  contraction for $\SLLM$. This new syntax, however,  allows its vector interpretation  to greatly differ  from the vector space interpretation of  contraction,  developed for $\blstar$  in \cite{mcpheat2020categorical}. This new interpretation  provides us with `full' copies. 
	In the framework of \cite{mcpheat2020categorical} it was only possible to achieve an approximation of (the nonlinear) full copying $v\mapsto v\otimes v$, a problem that is remedied here by exploiting the global multiplexing bound which in turn induced a bound on our interpretation of the $!$-modality.
	
	 Given any $n$ in the range $1\leq n \leq k_0$,  multiplexing  $T_{k_0}V \to V^{\otimes n}$ is interpreted as the projection map $\pi_n$. In general, this map will not appear to copy any of its inputs and contrarily it seems  to `forget' most of them.  This `forgetting' is exactly what we exploit when editing the shape of the input vectors using the simple $\sim$-construction: given any vector $v \in V$, we define $\tilde{v}\in T_{k_0}V$ as 
		\[\tilde{v} := \sum_{i=0}^{k_0}v^{\otimes i} = 1 + v + v\otimes v + \cdots + \overbrace{v\otimes \cdots \otimes v}^{k_0 \text{times}}
		\]
	By applying the $n$-th projection map to $\tilde{v}$ we get $\pi_n(\tilde{v}) = v^{\otimes n}$. 
	To achieve `full' copying, take $n=2$ and we obtain $\pi_2(\tilde{v}) = v \otimes v$. So we arrive at an interpretation of contraction which at the surface level looks like copying of $\sim$-ed vectors but which is in fact a simple, even canonical linear map.   
	Clearly we are not making the map $v \mapsto v\otimes v$ linear, nor are we truly copying our input (i.e. we are not achieving $\tilde{v}\mapsto \tilde{v}\otimes \tilde{v}$), instead, we think of the map $\pi_2:\tilde{v} \mapsto v \otimes v$ as copying $v$ by going through the layers of its bounded Fock space.

		\section{Categorical Semantics}\label{sec:catSem}
	
	For readers not familiar with category theory, this section may be skipped entirely. We present the categorical semantics as it is a useful theoretical object for defining further kinds of semantics, and broadens the audience for this work to include applied category theorists. We provide a brief introduction to the category theory used in this paper, but also recommend reading a brief introduction to category theory such as \cite{leinster2016}, as well as descriptions of categories in linguistic analysis, such as \cite{Lambek1988}. We will try to introduce the necessary categorical machinery next, before defining our categorical semantics.
	
	\begin{definition}
	
	By \textbf{monoidal biclosed category} we mean a set of \textbf{objects} and a set of \textbf{morphisms}.
	To define the objects, one first fixes a set of atomic symbols $\{A, B, C, \ldots\}$ and add a distinguished symbol $I$, and three binary operations $\otimes, \Rightarrow, \Leftarrow$. The full set of objects is then defined to be the free $(\otimes, \Rightarrow, \Leftarrow)$-algebra over the given set of atomic symbols. Thus our objects may look like $I$, $A$, $A\otimes A$, $A\Rightarrow B$ $B\Leftarrow (A\otimes B)$ and so on.
	
	Once we are familiar with the objects, we may define morphisms as a set of arrows pointing from the \textbf{domain} to the \textbf{codomain} (usually denoted $A\to B$ or $B\to C \otimes D$ etc). We may label morphisms by writing, for example, $f:A\to B$. For every object $A$, there is a unique identity morphism $\id_A :A \to A$. We may compose pairs of arrows if the codomain of one agrees with the domain of the other. That is: 
		\[(f: A\to B, \quad g: B \to C) \longmapsto g\circ f : A \to C.\]
	Composing with identity morphisms does nothing. That is, for any $f:A \to B$ we have 
		\[\id_B \circ f = f = f \circ \id_A.\]
	We also define monoidal products of morphisms as:
		\[(f: A\to B, \quad g: B \to C) \longmapsto g\otimes f : B\otimes A \to C \otimes B.\]
	Both $\circ$ and $\otimes$ are associative, and interact according to the following rule 
		\[(f\otimes g) \circ (f'\otimes g') = (f\circ f') \otimes (g\circ g'),\]
	given that the domains and codomains agree in the necessary way. When it is clear, one often omits the composition symbol, thus $g\circ f$ would be written as $gf$.
	\end{definition}
	
	Some typical examples of monoidal biclosed categories are the category of sets and functions, $\mathbf{Set}$, where the objects are sets, the monoidal product is the cartesian product, and for any two sets $A, B$ we define $A\Rightarrow B := \{f:A \to B\}$, i.e. $A\Rightarrow B$ is the set of functions from $A$ to $B$. In $\mathbf{Set}$ it so happens that $A\Rightarrow B \cong B \Leftarrow A$, although this is not necessarily the case in an arbitrary monoidal biclosed category. 
	An example of a monoidal biclosed category which we make use of is the category of finite dimensional (real) vector spaces $\fdVect$, where the objects are finite dimensional vector spaces, and the morphisms are linear maps. In $\fdVect$ the monoidal product is the tensor product, and the object $V\Rightarrow W$ is the set of linear maps from $V$ to $W$.
	
	We also recall briefly what a functor is. 
	
	\begin{definition}\label{def:functor}
	Given two (monoidal biclosed) categories $\C, \mathcal{D}$ a \textbf{functor} $F: \C \to \mathcal{D}$ is a pair of functions, one mapping objects of $\C$ to objects of $\mathcal{D}$ the other mapping the morphisms of $\C$ to morphisms of $\mathcal{D}$, such that given a morphism $f : A \to B$ in $\C$, we have $F(f) : F(A) \to F(B)$ in $\mathcal{D}$.
	We also require composition to be preserved, i.e. $F(g\circ f) = F(g) \circ F(f)$.
		
	\end{definition}
	A typical example of a functor is the identity functor $1_\C : \C \to \C$, which is defined on any category $\C$ as $1_\C(A):= A$ on objects and $1_\C(f) := f$ on morphisms.
	
	\begin{definition}\label{def:naturalTransformation}
		Given two functors $F,G : \C \to \mathcal{D}$, A \textbf{natural transformation} $\alpha$ from $F$ to $G$, denoted $\alpha : F \to G$, is a collection of morphisms indexed by objects of $\C$, $(\alpha_A : F(A) \to G(A))_{A\in \C}$ such that for any morphism $f:A\to B$ in $\C$, the following diagram commutes:
		\[
			\xymatrix{F(A) \ar[r]^{F(f)} \ar[d]^{\alpha_A} & F(B) \ar[d]^{\alpha_B} \\
			G(A) \ar[r]^{G(f)} & G(B)
			}
		\]
		That is, $G(f) \circ \alpha_A = \alpha_B \circ F(f)$. 
	\end{definition}
	
	Having recalled what monoidal biclosed categories and functors and natural transformations are, we may define a categorical semantics for $\SLLM$ with relative ease.
	
	\begin{definition}\label{def:categoricalSemantics}
	Our \textbf{categorical semantics of the decidable fragment of $\SLLM$}, with global multiplexing bound $k_0$ is a monoidal biclosed category $\mathcal{C}(\SLLM)$, equipped with two functors $M,P:\mathcal{C}(\SLLM) \to \mathcal{C}(\SLLM)$, defined below.
	
	The objects of $\C(\SLLM)$ are the formulas of $\SLLM$, and the morphisms $A\to B$ of $\C(\SLLM)$ are derivations of the sequent $A\longrightarrow B$ in $\SLLM$.
	The two functors are given names $M$ (for \textbf{multiplexing}) and $P$ (for \textbf{permutation}), which in turn have the following three structures.
	\begin{enumerate}
		\item For each $n=1,\ldots k_0$ and every object $A\in\C(\SLLM)$ there is a map $\delta_n : MA \to A^{\otimes n}$.
		\item There is a counit for $P$, that is, a natural transformation $\epsilon : P \to 1$.
		\item There is a natural isomorphism\footnote{A natural isomorphism is a natural transformation whose components are all isomorphisms.} $\sigma: 1\otimes P \cong P\otimes 1$.
	\end{enumerate}
	
	Note that the global multiplexing bound of $\SLLM$, $k_0$, determines the number of morphisms $\delta_n : MA \to A^{\otimes n}$ in our multiplexing functor.
	
	For every atomic formula $A$ of $\SLLM$ we prescribe an object $\csemantics{A} := C_A$. For the complex formulas of $\SLLM$ we interpret them as follows:
	\[\begin{array}{c}
		\csemantics{A,B}= \csemantics{A\cdot B} := \csemantics{A}\otimes \csemantics{B} 
		\qquad \csemantics{!A} := M\csemantics{A} 
		\qquad \csemantics{\nabla A} := P\csemantics{A}
		\\ \\
		\csemantics{A\bs B} := \csemantics{A}\Rightarrow \csemantics{B} 
		\qquad \csemantics{B/A} := \csemantics{B}\Leftarrow\csemantics{A}
	\end{array}\]
	
	\end{definition}
	
	We can now prove a proposition similar to our proposition \ref{prop:soundness}, that the  rules of $\SLLM$ are sound in $\C(\SLLM)$.  We sketch the proof for the more complicated rules and leave the rest for the attentive reader to fill in. 
	\begin{proof}[Sketch]
	$\mathbf{axiom}$ is interpreted as the identity morphism for objects of $\C(\SLLM)$.
	$\mathbf{\bs_L}$ is interpreted as:
		\[
			\infer[]
			{^fg: \csemantics{\Delta_1}\otimes \csemantics{\Gamma}\otimes \csemantics{A}\Rightarrow \csemantics{B}\otimes \csemantics{\Delta_2} \to \csemantics{C}}
			{f: \csemantics{\Gamma}\to \csemantics{A}
			&
			g: \csemantics{\Delta_1}\otimes\csemantics{B}\otimes \csemantics{\Delta_2} \to \csemantics{C}}
		\]
	where $^fg := g \circ (\id_{\csemantics{\Delta_1}}\otimes \ev_{\csemantics{A}\Rightarrow\csemantics{B}} \otimes \id_{\csemantics{\Delta_2}}) \circ (\id_{\csemantics{\Delta_1}}\otimes f \otimes \id_{\csemantics{\Delta_2}})$, much like the definition in the vector space semantics. 
	$\mathbf{\bs_R} $ is interpreted immediately, via the same tensor-hom adjunction used for vector space semantics. In the category theory, this  comes naturally from the requirement that $\C(\SLLM)$ be monoidal closed. 
	%
	$\mathbf{!_L}$ is interpreted in $k_0$ different instances, one for each $n=1,\ldots, k_0$. Explicitly, given $n$, this is:
	\[
		\infer[]
		{C_n(f): f:\csemantics{\Gamma_1}\otimes M\csemantics{A}\otimes \csemantics{\Gamma_2} \to \csemantics{B}}
		{f:\csemantics{\Gamma_1}\otimes\csemantics{A}^{\otimes n}\otimes \csemantics{\Gamma_2} \to \csemantics{B}}
	\]
	where $C_n(f)$ is defined to be $f\circ (\id_{\csemantics{\Gamma_1}}\otimes \delta_n \otimes \id_{\csemantics{\Gamma_2}})$.
	$\mathbf{!_R} :$ is interpreted immediately as the application of the functor $M$.
	$\mathbf{\nabla_L}$ is interpreted using the counit $\epsilon : P \to 1$ as follows:
	\[
		\infer[]
		{Q(f):\csemantics{\Gamma_1}\otimes P\csemantics{A}\otimes \csemantics{\Gamma_2} \to \csemantics{B}}
		{f: \csemantics{\Gamma_1}\otimes \csemantics{A}\otimes \csemantics{\Gamma_2} \to \csemantics{B}}
	\]
	where $Q(f)$ is defined to be $f\circ (\id_{\csemantics{\Gamma_1}}\otimes \epsilon \otimes \id_{\csemantics{\Gamma_2}})$. $\mathbf{\nabla_R}:$ is interpreted just like $!_R$ above, but as an application of $P$.
	$\mathbf{perm}, \mathbf{perm'}$ are interpreted using the natural isomorphism $\sigma :\id \otimes P \cong P\otimes \id$.
	\qed\end{proof}
	
	Note that since we have fixed a bound $k_0$ in the syntax of the modality $!$ see item 1 of definition \ref{def:categoricalSemantics}, the  categorical semantics defined above is not only sound, but also a complete model of $\SLLM$. This is an automatic construction akin to that of syntactic categories \cite{Johnstone} or free categories over a monoidal signature as in \cite{Selinger2010} and it lets you define semantics of $\SLLM$ in other categories $\mathcal{D}$ as structure preserving functors $\C(\SLLM)\to \mathcal{D}$, rather than tediously specifying a semantics in $\mathcal{D}$ directly from $\SLLM$. In fact, the vector space semantics $\semantics{\ }:\SLLM \to \fdVect$ of section \ref{sec:VSS} factors through $\csemantics{\ } : \SLLM \to \C(\SLLM)$.
	
	\section{$!$ and $\nabla$ in Natural Language Syntax and Semantics} \label{sec:examples}
	We demonstrate how to analyse the same  phenomena as in previous work \cite{mcpheat2020categorical} and \cite{mcpheat2021}, namely parasitic gaps, anaphora, (verb phrase) ellipsis, and a combination of the latter two. Again, $\SLLM$ distinguishes between strict and sloppy readings in the ambiguous anaphora with ellipsis cases and this distinction is also carried into the semantics. 
	
	\subsection{Parasitic Gaps}
		 The noun phrase \textit{Papers that John signed without reading}, with $\SLLM$ types\footnote{The choice of $n/n$ for the type of "reading" is from original work of \cite{Kanovich2016}.}:
		\[
		\begin{array}{c}
		\{
		(\text{Papers}: n),
		(\text{that}: (n\bs n)/(s/!\nabla n)),
		(\text{John}: n),
		(\text{signed}: n \bs s /n),\\
		(\text{without},((n\bs s)\bs(n\bs s))/n),
		(\text{reading}, n/n)
		\}
		\end{array}
		\]
		acquired the following $\SLLM$ derivation:
		\[
		\scalebox{0.85}{\infer[\bs_L]
			{n,(n\bs n)/(s/!\nabla n),n,n \bs s /n, ((n\bs s)\bs(n\bs s))/n, n/n \longrightarrow n}
			{\infer[\bs_L]{n,n\bs n \longrightarrow n}
				{\infer[]{n \longrightarrow n}{}
				& 
				\infer[]{n \longrightarrow n}{}}
				&
				\infer[/_R]{n,n \bs s /n, ((n\bs s)\bs(n\bs s))/n, n/n \longrightarrow s/!\nabla n}
				{\infer[!_L]{n,n \bs s /n,((n\bs s)\bs(n\bs s))/n, n/n, !\nabla n \longrightarrow s}
					{\infer[perm']{n,n \bs s /n,((n\bs s)\bs(n\bs s))/n, n/n, \nabla n,\nabla n \longrightarrow s}
					{\infer[perm']{n,n \bs s /n,((n\bs s)\bs(n\bs s))/n,\nabla n, n/n, \nabla n \longrightarrow s}
						{\infer[\nabla_L]{n,n \bs s /n,\nabla n,((n\bs s)\bs(n\bs s))/n, n/n, \nabla n \longrightarrow s}
							{\infer[\nabla_L]{n,n \bs s /n, n,((n\bs s)\bs(n\bs s))/n, n/n, \nabla n \longrightarrow s}
								{\infer[/_L]{n,n \bs s /n, n,((n\bs s)\bs(n\bs s))/n, n/n, n \longrightarrow s}
									{\infer[]{n\longrightarrow n}{}
									&
									\infer[/_L]{n,n \bs s /n, n,((n\bs s)\bs(n\bs s))/n, n \longrightarrow s}
										{\infer[]{n\longrightarrow n}{}
										&
										\infer[/_L]{n,n \bs s/n, n,(n\bs s)\bs(n\bs s) \longrightarrow s}
											{\infer[]{n\longrightarrow n}{}
											&
											\infer[\bs_L]{n,n\bs s,(n\bs s)\bs(n\bs s)\longrightarrow s}
												{\infer[]{n\bs s\longrightarrow n\bs s}{}
												&
												\infer[\bs_L]{n, n\bs s \longrightarrow s}
													{\infer[]{n\longrightarrow n}{}
													&
													\infer[]{s\longrightarrow s}{}}}}}}}}}}}}
			}
		}\]
	The linear map representing this derivation is:
	\small\[
	N \otimes (N^*\otimes N)\otimes(S\otimes (T_{k_0}N)^*)^*\otimes N \otimes N^*\otimes S\otimes N^* \otimes (N^*\otimes S)^*\otimes (N^*\otimes S)\otimes N^* \otimes N\otimes N^* \to N
	\]
	\normalsize
	\[
	\begin{array}{l}
	:: \ov{Papers}\otimes \ol{that} \otimes \ov{John} \otimes \ol{signed}\otimes\ol{without}\otimes\ol{reading}\\
		\mapsto 
		\ol{that}(\ov{Papers},\ov{without}(\ov{John}, \ol{signed}(-,-), \ol{reading}(-)))
	\end{array}\]
	simplified to the following by interpreting \textit{that} as evaluation:
	\[\ov{without}(\ov{John}, \ol{signed}(-,\ov{Papers}), \ol{reading}(\ov{Papers}))).\]
	
	\subsection{Pronominal Anaphora}
	Following our motivating example in \cite{mcpheat2021}, we consider the utterance \emph{John sleeps. He snores}, with the following $\SLLM$ types:
	\[\{(\text{\emph{John}}:\ !(\nabla n)),(\text{\emph{sleeps}}: n\bs s), (\text{\emph{He}}:\ \nabla n\bs n), (\text{\emph{snores}}:n\bs s)\}.\]
	
	Note the type of \textit{John} in this example is $!(\nabla n)$, whereas earlier it was simply $n$. In fact, one can always retrieve the type $n$ from $!(\nabla n)$ using the following type raising operation: 
	\[\infer[!_L]{\Gamma_1,!(\nabla n),\Gamma_2 \longrightarrow X}
		{\infer[\nabla_L]{\Gamma_1, \nabla n, \Gamma_2 \longrightarrow X}
			{\Gamma_1, n, \Gamma_2 \longrightarrow X}}
		\]

 The derivation corresponding to the syntax of this utterance in $\SLLM$ is:
	\[\scalebox{0.85}{
	\infer[contr_2]
	{!(\nabla n), n\bs s,\ \nabla n\bs n, n\bs s \longrightarrow s \cdot s}
	{\infer[perm]{\nabla n,\,\nabla n, n\bs s,\ \nabla n\bs n, n\bs s \longrightarrow s \cdot s}
		{\infer[\nabla_L]{\nabla n, n\bs s,\,\nabla n, \ \nabla n\bs n, n\bs s \longrightarrow s \cdot s}
			{\infer[\bs_L]{n, n\bs s,\,\nabla n, \ \nabla n\bs n, n\bs s \longrightarrow s \cdot s}
				{\infer[]{n \longrightarrow n}{}
				& 
				\infer[\bs_L]{s,\,\nabla n, \ \nabla n\bs n, n\bs s \longrightarrow s \cdot s}
					{\infer[\nabla_R]{\nabla n \longrightarrow \ \nabla n}
						{\infer{n \longrightarrow  n}{}}
					&
					\infer[\bs_L]{s,n,n\bs s \longrightarrow s \cdot s}
						{\infer[]{n \longrightarrow n }{}
						&
						\infer[\cdot_R]{s,s \longrightarrow s \cdot s}
							{\infer{s \longrightarrow  s}{}
							&
							\infer{s \longrightarrow  s}{}}}}}}}}
	}\]
	This derivation translates into the following linear map:
	\[\begin{array}{c}
			f: T_{k_0}N \otimes N^* \otimes S \otimes N^* \otimes N \otimes N^* \otimes S \longrightarrow S \otimes S
			\\
			:: \tilde{n}\otimes \nu \otimes s \otimes  \nu' \otimes n' \otimes \nu'' \otimes s' \longmapsto \nu (n) s \otimes \nu'(n)\nu''(n')s
	\end{array}\]
	
	Taking  $\widetilde{\text{\emph{John}}}\in T_{k_0}N$, $\overrightarrow{\text{\emph{sleeps}}} \in N^* \otimes S$, $\overrightarrow{\text{\emph{He}}}  \in N^* \otimes N$ and $\overrightarrow{\text{\emph{snores}}} \in N^*\otimes S$ we obtain the simplified version below:
	\[f\left(\widetilde{\text{\emph{John}}} \otimes\overrightarrow{\text{\emph{sleeps}}} \otimes \overrightarrow{\text{\emph{He}}} \otimes\overrightarrow{\text{\emph{snores}}} \right) = \overrightarrow{\text{\emph{sleeps}}}(\overrightarrow{\text{\emph{John}}}) \otimes \overrightarrow{\text{\emph{snores}}}(\overrightarrow{\text{\emph{He}}}(\overrightarrow{\text{\emph{John}}})).\]

	\subsection{Ellipsis}
	Consider the utterance \emph{John plays guitar. Lisa does too.} with $\SLLM$ types:
	\[\{
	(\text{\emph{John}}: n),(\text{\emph{plays guitar}}: !(\nabla (n\bs s))),(\text{\emph{Lisa}}: n), (\text{\emph{does-too}}:\ \nabla (n\bs s)\bs n\bs s)
	\}\]
	
	The derivation of this utterance  in $\SLLM$ is:
	\[\scalebox{0.85}{
	\infer[contr_2]
	{n,!(\nabla (n\bs s)),n,\ \nabla (n\bs s)\bs n\bs s \longrightarrow s \cdot s}
	{\infer[perm]{n,\ \nabla (n\bs s),\ \nabla (n\bs s),n,\ \nabla (n\bs s)\bs n\bs s \longrightarrow s \cdot s}
		{\infer[\nabla_L]{n,\ \nabla (n\bs s),n,\ \nabla (n\bs s), \ \nabla (n\bs s)\bs n\bs s \longrightarrow s \cdot s}
			{\infer[\bs_L]{n, n\bs s,n,\ \nabla (n\bs s), \ \nabla (n\bs s)\bs n\bs s \longrightarrow s \cdot s}
				{\infer{n\longrightarrow n}{}
				&
				\infer[\bs_L]{s,n,\ \nabla (n\bs s), \ \nabla (n\bs s)\bs n\bs s \longrightarrow s \cdot s}
					{\infer[\nabla_R]{\ \nabla (n\bs s) \longrightarrow \ \nabla (n\bs s)}
						{\infer[\bs_R]{n\bs s \longrightarrow n\bs s}
							{\infer[\bs_L]{n,n\bs s \longrightarrow  s}
								{\infer{n\longrightarrow n}{}
								&
								\infer{s\longrightarrow s}{}}}}
					&
					\infer[\bs_L]{s,n,n\bs s \longrightarrow s \cdot s}
						{\infer{n\longrightarrow n}{}
						&
						\infer[\cdot_R]{s,s\longrightarrow s\cdot s}
							{\infer{s\longrightarrow s}{}
							&
							\infer{s\longrightarrow s}{}}}}}}}}
	}\]
	which gives us the following linear map
	\[\begin{array}{c}
		f: N \otimes T_{k_0}(N^*\otimes S) \otimes N \otimes (N\otimes S)^* \otimes N^*\otimes S \longrightarrow S \otimes S 
		\\
		
	\end{array}
	\]
	and again with the following  notational convenience 
	\[\begin{array}{c}
	\overrightarrow{\text{\emph{John}}},\overrightarrow{\text{\emph{Lisa}}}\in N
	\qquad
	\widetilde{\text{\emph{plays guitar}}} \in T_{k_0}(N^* \otimes S)
	\\
	\overrightarrow{\text{\emph{does-too}}} \in (N \otimes S)^*\otimes N^* \otimes S
	\end{array}
	\]
	we can see the action of $f$ simplified to:
	\[
	\begin{array}{l}
	f(\overrightarrow{\text{\emph{John}}} \otimes\widetilde{\text{\emph{plays guitar}}} \otimes \overrightarrow{\text{\emph{Lisa}}} \otimes  \overrightarrow{\text{\emph{does-too}}}) \\ 
	:= \overrightarrow{\text{\emph{plays guitar}}}
	(\overrightarrow{\text{\emph{John}}}) \otimes \overrightarrow{\text{\emph{does-too}}}(\overrightarrow{\text{\emph{plays guitar}}} \otimes \overrightarrow{\text{\emph{Lisa}}}) 
	\end{array}
	\]

	\subsection{Anaphora with Ellipsis}
	Consider the utterance \emph{Kim likes their code. Sam does too}, with $\SLLM$ types:
	\[
	\begin{array}{c}
		\{\text{(\emph{Kim}}:!(\nabla n)),(\text{\emph{Sam}}:!(\nabla n)),
		(\text{\emph{likes}}:!(\nabla(!(\nabla (n\bs s))/n))), 
		(\text{\emph{their}}:!(\nabla(\nabla n\bs n))/n), \\
		(\text{\emph{code}}:n), 
		(\text{\emph{does too}}:\ \nabla (n\bs s)\bs (n\bs s))
		\}.
	\end{array}\]
	This utterance is ambiguous and its ambiguity comes from the fact that one can read it in a \emph{sloppy} way as:  \emph{Kim likes Kim's code. Sam likes \textbf{Sam's} code}, or a \emph{strict} way and as \emph{Kim likes Kim's code. Sam likes \textbf{Kim's} code}.
	In \cite{mcpheat2021} we showed how to distinguish between these two readings syntactically and  semantically in the framework of  \cite{mcpheat2020categorical}. One can do  the same in $\SLLM$  by means of their two distinct derivations, provided in figures \ref{fig:strict} and \ref{fig:sloppy} of the Appendix.

\section{Experimental Evaluation}

To experiment with ellipsis, we use an existing  elliptical sentence similarity dataset\footnote{\url{https://github.com/gijswijnholds/compdisteval-ellipsis/blob/master/datasets/ELLSIM.txt}}, referred to as ELLSIM and originally developed in \cite{wijnholds-sadrzadeh-2019-evaluating};  This dataset extends the transitive sentence similarity dataset\footnote{\url{http://www.cs.ox.ac.uk/activities/compdistmeaning/emnlp2013_turk.txt}}, referred to as KS2013 and developed in  \cite{kartsaklis-sadrzadeh-2013-prior} from transitive sentences of the form \emph{Subject Verb Object} to sentences with verb phrase ellipsis of the form \emph{Subject Verb Object and Subject* does too}. We briefly review these datasets below. 

\subsection{Descriptions of the Datasets}
The KS2013 dataset consists of 108 transitive sentence pairs of the form of \emph{Subject Verb Object}, where the  \emph{Verb Object} pairs come from the dataset ML2010 of  \cite{Mitchell2010}. KS2013 extends these \emph{Verb Object} pairs by adding  subjects to them, according to the following procedure. Each  pair in ML2010 consists of  phrases with  verbs and  objects that are synonymous.  To the first  phrase of each pair,  KS2013  adds  a subject selected from the 5 most frequent subjects of the  verb of the phrase. To the second phrase of the pair, it adds a subject that is synonymous to the --just added-- subject of the first sentence.  As a result of this procedure, KS2013 obtains pairs of sentences that have similar subjects, verbs, and objects to each other. These sentences are rendered as   `highly similar' to each other. As an example, consider the following pair  from the ML2010 dataset:
\[
\langle\mbox{\emph{produce effect, achieve result}}\rangle
\]
This becomes as follows in KS2013:
\[
\langle\mbox{\emph{drug produce effect, medication achieve result}}\rangle
\]
Another example is the following pair  of ML2010:
\[
\langle\mbox{\emph{pose problem, address question}}\rangle
\]
which became as follows in KS2013:
\[\langle\mbox{\emph{study pose problem, paper address question}}\rangle
\]

ML2010 and KS2013 datasets are divided into three bands of pairs: pairs of sentences that have high, low or medium degrees of similarity to each other. The procedure  described above,  returns sentences pairs that are in the high band. The pairs  of the medium band are obtained  by pairing the  sentences of the high band with  sentences  that  only have one  synonymous words with them. Similarly, the pairs in the low band are obtained by pairing the sentences in the high band with sentences that only have one or no synonymous words to them. Examples of pairs of the medium and low bands are provide below:
\begin{eqnarray*}
&\langle&\mbox{\emph{drug produce effect, medication address problem}}\rangle: \mbox{MEDIUM}\\
&\langle&\mbox{\emph{drug produce effect, medication pose result}}\rangle: \mbox{MEDIUM}\\
&\langle&\mbox{\emph{drug produce effect, medication pose problem}}\rangle: \mbox{LOW}\\
&\langle&\mbox{\emph{drug produce effect, employee start work}}\rangle: \mbox{LOW}
\end{eqnarray*}

Similar to ML2010, the KS2013 dataset is  uploaded to AmazonTurk and annotations are obtained for it by multiple participants in a scale from 1 to 7;1 for low similarity, 7 for highly similar. For example, $\langle$\emph{drug produce effect}$\rangle$ and $\langle$\emph{employee start work}$\rangle$ are judged by most annotators to have a low degree of similarity, while $\langle$\emph{drug produce effect}$\rangle$ and $\langle$\emph{medication achieve result}$\rangle$ are considered highly similar. A value in the median of this range represents a medium degree of similarity. 

The ELLSIM dataset is built from KS2013 by choosing two new subjects, let's call denote them   by \emph{Subject*}, for each sentence. \emph{Subject*} is chosen from a list of most frequent subjects of the verb of the sentence, extracted from ukWaC and Wackypedia corpora. The new subjects are appended to a ``does too" ellipsis marker, thus forming the elliptical phrase ``\emph{Subject*} does too". Joining this phrase to the original sentence with a conjunctive word such as ``and"   turns that sentence into the sentence \emph{Subject Verb Object and Subject* does too}, which is a sentence with an elliptical phrase. Here is an example:  consider the KS2013 sentence:  
\[
\mbox{\emph{drug produce effect}}
\]
Two new subjects \textit{plant} and \textit{combination} are chosen for this sentence. These new subjects are used to form the following two elliptical phrases:
\begin{eqnarray*}
&&\mbox{\emph{plant does too}} \\
&&\mbox{\emph{combination does too}}
\end{eqnarray*}
which are then conjoined with the first sentence with an ``and", resulting in the following sentences with elliptical phrases:
\begin{eqnarray*}
&&\mbox{\emph{drug produce effect and plant does too}} \\
&&\mbox{\emph{drug produce effect and combination does too}}
\end{eqnarray*}
\noindent Another example is  the following KS2013 sentence: 
 \[
 \mbox{\emph{medication achieve result}}
 \]
 which is turned into the following two sentences with  elliptical phrases:
 \begin{eqnarray*}
 &&\mbox{\emph{medication achieve result, and patient does too}} \\
 &&\mbox{\emph{medication achieve result and programme does too}}
  \end{eqnarray*}
At the end of this procedure all of the sentences of KS2013 are turned into sentences with elliptical phrases. Each of these newly generated  sentences  are paired with the same sentences as they were before, except  that the sentences they were paired with before are now themselves sentences with elliptical phrases in them. A snapshot of the entries  of the dataset is presented in Table \ref{tab:ellsim}.  The  final dataset contains 432 entries and new annotations are obtained for it following the same procedure as the one that was followed for KS2013, described above. 9800 annotations by 42 different participants are obtained for ELLSIM.

\begin{table}
\centering
\scalebox{0.87}{\begin{tabular}{cc}
\toprule
\textbf{Sentence 1} & \textbf{Sentence 2}
\\ \toprule
\makecell[l]{drug produce effect and combination does too} & medication achieve result and patient does too \\
\makecell[l]{drug produce effect and plant does too} & medication achieve result and patient does too \\
\makecell[l]{drug produce effect and combination does too} & medication achieve result and programme does too \\
\makecell[l]{drug produce effect and plant does too} & medication achieve result and programme does too \\
\bottomrule
\\
\end{tabular}}
\caption{An example of a pair of sentences from ELLSIM dataset. }
\label{tab:ellsim}
\end{table}

\subsection{Building the Representations}

We build vectors for each sentence of ELLSIM and calculate the cosine similarities between pairs of sentences therein. For building these vectors, we first use 300 dimensional pre-trained  \textbf{Word2vec} and \textbf{FastText} word embeddings for the $\ov{Subject}, \ov{Subject^*}$, and $\ov{Object}$ vectors of the dataset. Then, and for the verbs of ELLSIM, we form matrices  $\overline{Verb}$  obtained according to a   formula  developed in \cite{GrefenSadrEMNLP}, called the \emph{Relational} method. This formula is given below:
\[
\overline{Verb} := \sum_i \ov{s}_i \otimes \ov{o_i}
\]
It takes the Kronecker  products of all the subjects $\ov{s}_i$ and objects $\ov{o}_i$  that the verb related to each other in the sentences of the corpus.  The $\ov{s}_i$ and $\ov{o}_i$  vectors of this formulae are built in the same way as the vectors for subjects and objects of the sentences of the dataset, described above.

Leaving the elliptical phrases aside for the moment, in the next step, we describe how we build vector representations for the \emph{Subject Verb Object} parts of the sentences of the dataset. In this step,  the relational verb matrix of the verb of each sentence is \emph{composed} with the    \emph{Subject} and  \emph{Object} vectors of the subject and object of the sentence using the methods developed in   \cite{kartsaklis-etal-2012-unified}, referred to as \emph{Copy Obj} and \emph{Copy Subj}. We  also use the \emph{Frob Add} and \emph{Frob Mult} methods developed in \cite{milajevs-etal-2014-evaluating}.  These compositions are obtained according to the following formulae:

\begin{eqnarray*}
&\mbox{\bf Copy Obj:}&  (\overline{Verb}  \times \ov{Object}) \odot \ov{Subject} \quad \\
&\mbox{\bf Copy Subj:}&  (\overline{Verb}  \times \ov{Subject}) \odot \ov{Object}   \quad \\
&\mbox{\bf Frob Add:}&  \quad  \mbox{\bf Copy Obj} + \mbox{\bf Copy Sub}\\
&\mbox{\bf Frob Mult:}& \quad \mbox{\bf Copy Obj} \odot  \mbox{\bf Copy Sub}\  \\
\end{eqnarray*}

The  ellipsis is taken into account in the final step, where,  we  use the methodology described in the paper to  resolve the ellipsis. Conceptually, this is obtained by  treating the \emph{does too} phrase as a copying map  and applying it to the result of each of the compositional methods described above.  In effect, this procedure will  produce two copies of the verb phrase \emph{Verb Object} of the first part of the sentence and then applies each one of them to the vector of each of the subjects:  \emph{Subject} and \emph{Subject*}. At the end, we obtain a vector representation for the whole sentence with elliptical phrase, that is for  \emph{Subject Verb Object and Subject* does too}. Each compositional method will indeed result in a different vector representation.  For the   \emph{Copy Obj}  method we have the following representation:

\small\[
\mbox{\bf Multiplex}:  ((\overline{Verb} \times \ov{Object}) \odot \ov{Subject}) +  ((\overline{Verb} \times \ov{Object}) \odot \ov{Subject^*})
\]
The formula for the \emph{Copy Subj}  method is obtained from the above by exchanging  $\ov{Object}$ and $\ov{Subject}$; \emph{Frob Add} is the sum of \emph{Copy Obj} and \emph{Copy Subj}, \emph{Frob Mult} is their product.

\begin{table}[t!]
  \centering
  \scalebox{0.9}{\begin{tabular}{lcc} 
  \toprule
    \textbf{Method}&\textbf{Embeddings}& \quad  \textbf{Results}  \\
    \midrule
        \textbf{Copy Subject} &  \makecell[r]{word2vec\\fasttext} &  \makecell[r]{0.644\\0.591} \\
    \midrule
    \textbf{Copy Object} &  \makecell[r]{word2vec\\fasttext} &  \makecell[r]{0.604\\0.599} \\
    \midrule
    \textbf{Frobenius Add.} &  \makecell[r]{word2vec\\fasttext} &  \makecell[r]{0.653\\0.610} \\
    \midrule
    \textbf{Frobenius Mult.} &  \makecell[r]{word2vec\\fasttext} &  \makecell[r]{0.587\\0.579} \\
   \hline
   \hline
   Baselines\\
  \hline
  \hline
  \textbf{Verb Only Vector} & \makecell[r]{word2vec\\fasttext} &  \makecell[r]{0.583\\0.651}\\
 \midrule
 \textbf{Verb Only Tensor} & \makecell[r]{word2vec\\fasttext} &  \makecell[r]{0.566\\0.533}\\
 \midrule
 \textbf{Additive} & \makecell[r]{word2vec\\fasttext} &  \makecell[r]{0.768\\0.783}\\
 \midrule
 \textbf{BERT phrase}  & & 0.575 \\
 \bottomrule
 \\
 \end{tabular}}
\caption{Experiment 1: Spearman $\rho{}$ scores}
 \label{tab:exp1}
\end{table}


After a representation is built for each sentence of the dataset, we measure the degree of  similarity  between sentences of each pair. Traditionally, this is done by computing the cosine distance between the sentence vectors. The distances are then put  against  the degrees of similarity obtained from human annotations for each pair.  In order to form a judgement about whether the representations were good or bad, we perform three statistical tests, described in the following three subsections.  

For each of these tests, we compare  different  vector embeddings and compositional methods with each other  and  with a  few  baselines. We describe our baselines below. Firstly, we consider a non-compositional baseline that takes the representation of each sentence to be the same as the representation of its verb, thus ignoring all the other constituents of the sentence. For this, we use the  \emph{Verb} only vector  and \emph{Verb} only tensor representations of the \emph{Verb}. For  the \emph{Verb} only tensor, we use the \emph{Relational} matrix of the verb as described above. Another of our baselines is the  Additive model; this model is compositional but does not take the grammatical structure into account. It is obtained by adding the vectors of  \emph{Subject},  \emph{Verb}, \emph{Object}, and \emph{Subject*} of the sentence. Our next baseline is the BERT base model; for this we used pre-trained embeddings of dimension 768 extracted from the second-to-last hidden layer of all tokens in the sentence with average pooling.

%
%

 As an example, consider the pair of sentences:

\begin{itemize}
	\item[]
		 S1: \emph{drug produce effect and combination does too}
	\item[] 
		 S2: \emph{medication achieve result and patient does too}
\end{itemize}

\noindent Following our  methods, we obtain the following vectors for them:\\

\begin{itemize}
	\item 
	
	\noindent Our copying  map applied to the  \emph{Copy Obj}:
	\\\\
	\scriptsize
	$\ov{S1}$ =  $((\overline{produce} \times \ov{effect}) \odot \ov{drug}) +  ((\overline{produce} \times \ov{effect}) \odot \ov{combination})$
	
	$\ov{S2}$ =  $((\overline{achieve} \times \ov{result}) \odot \ov{medication}) +  ((\overline{achieve} \times \ov{result}) \odot \ov{patient})$
	\\
	\normalsize

	\noindent Our copying  map applied to the  \emph{Copy Subj}:
	\\\\
	\scriptsize
	$\ov{S1}$ =  $((\overline{produce} \times \ov{drug}) \odot \ov{effect}) +  ((\overline{produce} \times \ov{combination}) \odot \ov{effect})$
	
	$\ov{S2}$ =  $((\overline{achieve} \times \ov{medication}) \odot \ov{result}) +  ((\overline{achieve} \times \ov{patient}) \odot \ov{result})$
	\\
	\normalsize

%
%
%
%
	\item	
	 
	\noindent The Additive baseline:
	\\
	\\
	\scriptsize
	$\ov{S1}$ =  {$\ov{drug} + \ov{produce} + \ov{effect} + \ov{combination}$}
	
	$\ov{S2}$ =  {$\ov{medication} + \ov{achieve} + \ov{result} + \ov{patient}$}
	\normalsize
\end{itemize}

\subsection{Spearman Experiment}

Following \cite{wijnholds-sadrzadeh-2019-evaluating}, we calculate Spearman’s rank correlation coefficient between the cosine similarity scores of pairs of sentences of ELLSIM and the average human annotation judgments. This is a value between -1 and +1. Results are presented in Table \ref{tab:exp1}. Our first observation is that the best performing model was the Additive baseline. It achieved the highest Spearman correlation score of 0.783 in the  \textbf{FastText}  space. Between all our compositional methods, the highest correlation score was 0.653 with the Frobenius Additive and in the \textbf{Word2vec} space. All of the compositional models outperformed the BERT phrasal model but not the Additive baseline model. In both cases, the results are impressive especially since BERT is considered to be the state-of-the-art context-based neural model, trained on a huge dataset, and consisting of millions of parameters. BERT created a breakthrough in the field of NLP by providing greater results in many NLP tasks, such as question answering, text generation, sentence classification, and many more besides. Getting better results than BERT in this sentence similarity task is considered a good achievement, but of course, we fell short of a very simple additive model.


\begin{table}[t!]
  \centering
  \scalebox{0.9}{\begin{tabular}{lcc} 
  \toprule
    \textbf{Method}& \textbf{Embeddings} & \quad \textbf{Results} \\
    \midrule
    \textbf{Copy Subject} &  \makecell[r]{word2vec\\fasttext} &  \makecell[r]{15.99\\15.09}\\
    \midrule
    \textbf{Copy Object} &  \makecell[r]{word2vec\\fasttext} &  \makecell[r]{14.82\\14.40} \\
    \midrule
    \textbf{Frobenius Add.} &  \makecell[r]{word2vec\\fasttext} &  \makecell[r]{14.86\\14.24} \\
    \midrule
    \textbf{Frobenius Mult.} &  \makecell[r]{word2vec\\fasttext} &  \makecell[r]{15.97\\14.85} \\
   \hline
   \hline
   Baselines\\
  \hline
  \hline
 \textbf{Verb Only Vector} & \makecell[r]{word2vec\\fasttext} &  \makecell[r]{15.05\\15.13}\\
 \midrule
 \textbf{Verb Only Tensor} & \makecell[r]{word2vec\\fasttext} &  \makecell[r]{15.01\\14.56}\\
 \midrule
  \textbf{Additive} & \makecell[r]{word2vec\\fasttext} &  \makecell[r]{14.32\\14.32}\\
 \midrule
 \textbf{BERT phrase}  & & 13.76 \\
 \bottomrule
 \\
\end{tabular}}
\caption{Experiment 2: Student's $t$-test results }
\label{tab:exp2}
\end{table}


\subsection{Student's $t$-test Experiment}
Following \cite{polajnar2014b} we  examine the difference between two group means: the average human annotation judgments and the cosine similarity scores via student's $t$-test. First, we match each sentence pair in the dataset with its corresponding group of sentences. Then, we calculate the $t$-score for each group and compute the overall average. The smaller the $t$-score, the more similarity exists between the two sets. Results are presented in Table \ref{tab:exp2}. Here, BERT did the best with the lowest t-score, which was 13.76. Then, we had our Frobenius Add. model with a t-score of 14.24. In the third place, came  the Additive baseline with a t-score of 14.32.  Our best results were achieved  best while using \textbf{FastText} embeddings.

\begin{table}[t!]
  \centering
  \scalebox{0.9}{\begin{tabular}{lcc} 
  \toprule
    \textbf{Method}&\textbf{Embeddings} &\quad  \textbf{Results} \\
    \midrule
        \textbf{Copy Subject} &  \makecell[r]{word2vec\\fasttext} &  \makecell[r]{71.18\%\\73.44\%} \\
    \midrule
    \textbf{Copy Object} &  \makecell[r]{word2vec\\fasttext} &  \makecell[r]{76.33\%\\74.19\%} \\
    \midrule
    \textbf{Frobenius Add.} &  \makecell[r]{word2vec\\fasttext} &  \makecell[r]{76.04\%\\73.67\%} \\
    \midrule
    \textbf{Frobenius Mult.} &  \makecell[r]{word2vec\\fasttext} &  \makecell[r]{69.79\%\\71.24\%} \\
   \hline
   \hline
   Baselines\\
  \hline
  \hline
  \textbf{Verb Only Vector} & \makecell[r]{word2vec\\fasttext} &  \makecell[r]{76.44\%\\78.53\%}\\
 \midrule
 \textbf{Verb Only Tensor} & \makecell[r]{word2vec\\fasttext} &  \makecell[r]{73.30\%\\73.30\%}\\ 
 \midrule
 \textbf{Additive} & \makecell[r]{word2vec\\fasttext} &  \makecell[r]{82.00\%\\81.77\%}\\
 \midrule
 \textbf{BERT phrase}  & & 75.93\% \\
 \bottomrule
\\
\end{tabular}}
\caption{Experiment 3: Classification Accuracy results }
\label{tab:experiment3}
\end{table}


\subsection{Classification Experiment}
Following \cite{mcpheat2020categorical,mcpheat2021}, we also perform a classification experiment. We compare the average human annotation judgments and the cosine similarity scores between triplets of sentences. An example of these triplets is shown below. A new dataset\footnote{\url{https://cutt.ly/QmacrwT}} was generated from the original dataset and used in this experiment. 

\begin{itemize}
	\item[] 
		Sentence 1: \emph{drug produce effect and combination does too}
	\item[] 
		Sentence 2: \emph{employee start work and team does too}
	\item[] 
		Sentence 3: \emph{committee consider matter and panel does too}
\end{itemize}

A correct classification is obtained when both the average human annotation score and cosine similarity  between one pair of the triplet is higher than the other pair. In other words, the average human annotation score and cosine similarity agree. 
An overall accuracy is computed by counting the number of correct classifications out of these comparisons. For instance, the cosine similarity measure between Sentences 1 and 2 above using the Additive model and \textbf{Word2vec} embeddings is 0.104, while that between Sentences 1 and 3 above is 0.195. On the other hand, the average human annotation score between Sentences 1 and 2 is  2.217 while that between Sentences 1 and 3 is 2.695. As a result, the cosine similarity score between Sentences 1 and 3 is higher than that between 1 and 2 and similarly the human annotation scores. This match is counted as a correct classification.

Table \ref{tab:experiment3} shows that similar to the Spearman's  test,  the Additive model  achieved the highest accuracy of 82.00\%, even higher than BERT, whose accuracy was 75.93\%.  Our best model was  \textit{Copy Obj}  with an accuracy of 76.33\%. This was in fact the second best accuracy after the Additive model with values  even higher than BERT's.

\section{Conclusion and Outlook}
We develop a categorical and a vector  space semantics for a decidable version of an extension of Lambek Calculus used to model parasitic gaps and coreference.  This enables us to rely on a decidable logic to produce the type-logical derivations of these phenomena and develop compositional type-driven vector representations for them. We also achieved what  previously was not possible: working with full copies of types, rather than their non-linear approximations.  We perform three statistical  tests using data from large scale corpora and a distributional verb phrase elliptical similarity task. In two out of the three experiments,  our model  outperformed the state-of-the-art BERT. We think this is impressive since  BERT has perform very well on downstream NLP tasks, which require complex modelling of structural relationships \cite{lin2019}. Our models, however, were outperformed by a very simple additive model.  The difference between our best model and the Additive model was were however,  very small  and in the range 0.1 or around 10\%.  

What can we learn from these results? That we beat BERT in one test and came very close to it in another,  is quite impressive;  that in two tests, we were beaten by a very simple additive model that ignores all structure, is disappointing, however, we did come very close to this model. We can for sure say that it is worth taking the grammatical and discourse structures into account when building vector representations for sentences. The resulting representations might not perform better than other simple or state of the art models, but will nonetheless come very close to it. So if our aim is to represent structures of sentences as well as their distributional properties, our model is the one to go for. One might not obtain the best results, but the loss is very little. In our experiments, the differences between our best model and the BERT or Additive models were only about 0.1 in $t$-test  and Spearman and 10\% in accuracy. A qualitative analysis only made sense for our classification test, the other tests were made over the patterns of the whole dataset. Going through the instances of the classification test, we observed that in pairs of sentences such as $\langle$ \emph{medication achieve result and patient does too, drug produce effect and combination does too}$\rangle$  where the disambiguation is more  clear, the additive model does better. In this example, the sentence pair belonged to the HIGH similarity band,  the cosine of the angle between its sentences in the additive model was 0.74, whereas the cosine of the angle in our model was 0.72. The human annotations where also on the high side with an average score of 4.26 (recall that these are between 1 and 7). In examples where the disambiguation was less clear, e.g.  in the pair $\langle$\emph{user send message and server does too, man hear word and listener does too}$\rangle$, from the MEDIUM similarity band, our model worked better by producing a cosine of 0.33 as opposed to the additive model which resulted in a cosine of  0.13. The average human score in this case was 3.33. 
	
	
	A few directions to pursue for future work are as follows. Firstly,  the theoretical challenge  of finding complete  vector space models   remains an open challenge. Investigating completeness possibly through centres of monoidal functors over finite dimensional vector spaces with a Hopf structure  is our work in progress. Then, experimenting with the setting on the Winograd Schema Challenge \cite{levesque2012} using  the plausibility models of \cite{polajnar2014b}  is  work in progress. Still on the experimental side, we have  compared the current results to its   approximations, as developed in previous work \cite{mcpheat2020categorical}. The  basis-copying operations  used before, the so called Frobenius copying maps, did not perform well. Their additive combination using a method we called $k$-extension, however, produced comparable results. Here  we observed an anomaly that we could not resolve in this paper and which we leave to further experimentation. 
	
	%
	

\bibliographystyle{plain}
\bibliography{references}

\section{Appendix}
	\begin{figure}[h!]
	\includegraphics[width=\columnwidth]{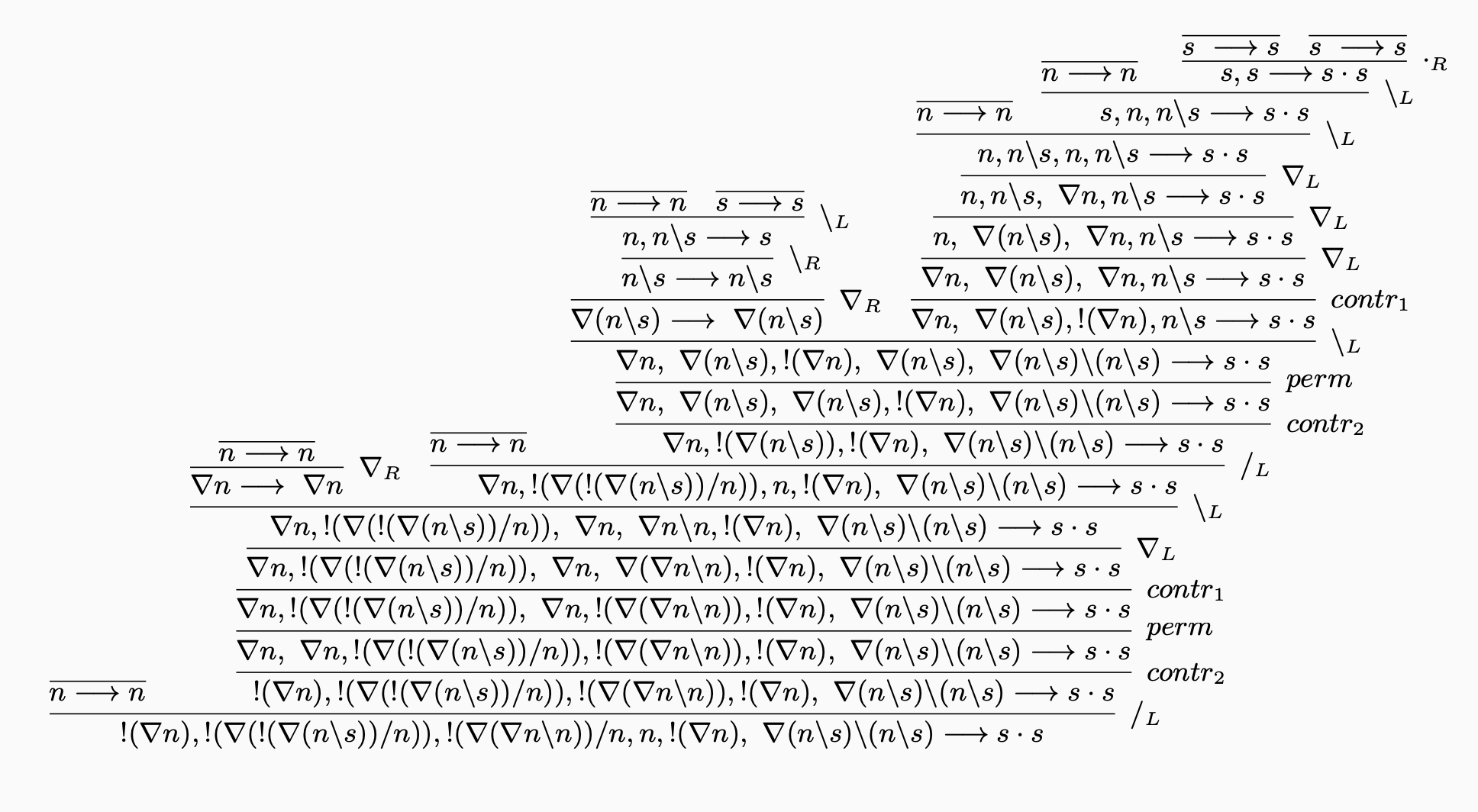}
	\caption{The strict reading of \emph{Kim likes their code. Sam does too}.}\label{fig:strict}
	\end{figure}
	
	\begin{figure}[b!]
	\centering
	\includegraphics[width=\columnwidth]{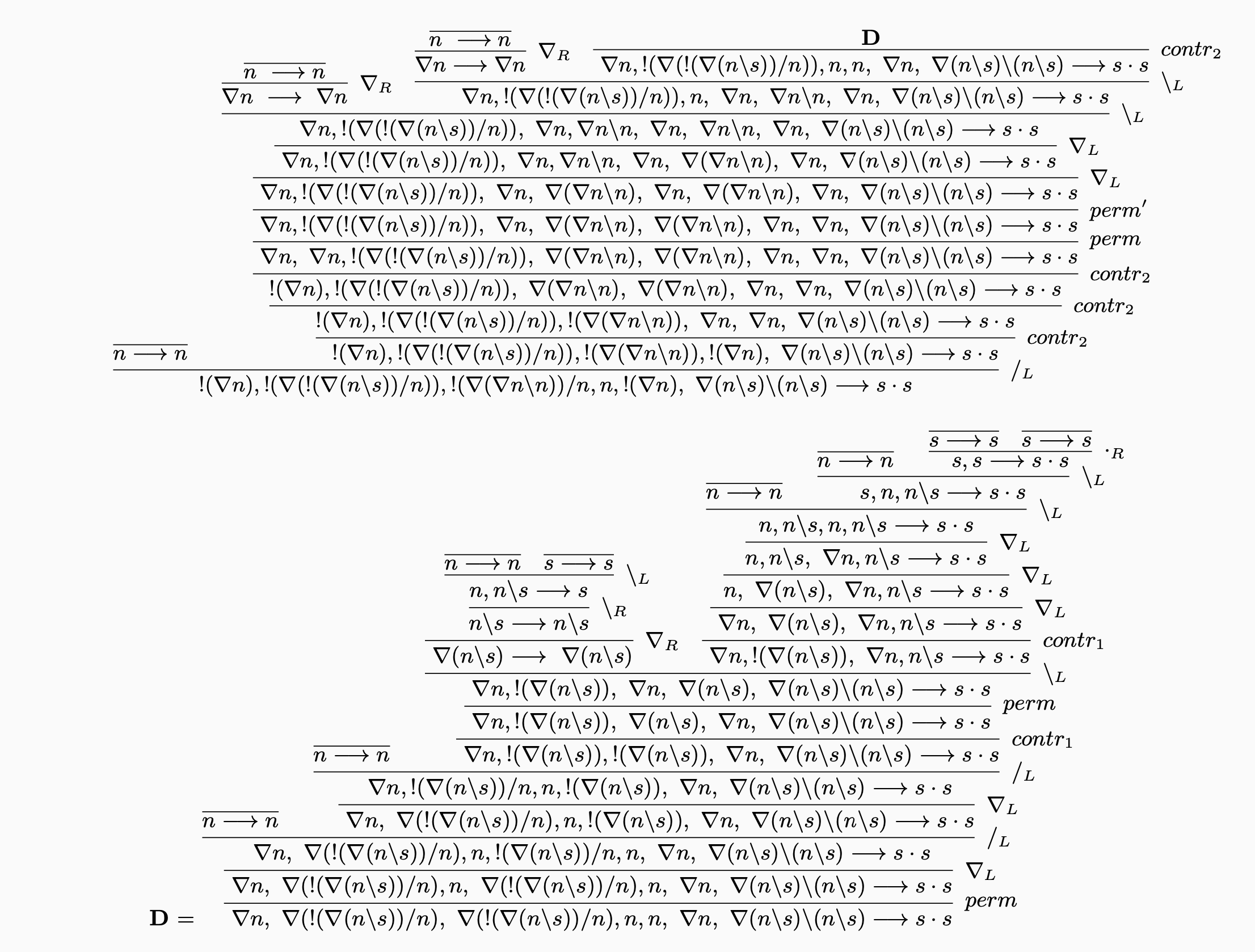}
	\caption{The sloppy reading of \emph{Kim likes their code. Sam does too}.}\label{fig:sloppy}
	\end{figure}
\end{document}